\newtheorem{theorem}{Theorem}
\newtheorem{lemma}[theorem]{Lemma}
\begin{document}
\title{The Erd\H{o}s-Hajnal Conjecture for Long Holes and Anti-holes}
\author[1]{Marthe \textsc{Bonamy}\thanks{Email: \texttt{bonamy@lirmm.fr} Partially supported by the ANR Project \textsc{EGOS} 
    under \textsc{Contract ANR-12-JS02-002-01}}}
\author[1]{Nicolas \textsc{Bousquet}\thanks{Email: \texttt{bousquet@lirmm.fr}}}
\affil[1]{Universit\'e Montpellier 2 - CNRS, LIRMM, 161 rue Ada, 34392 Montpellier, France}
%\email[1]{\tt<bonamy,bousquet@lirmm.fr>}
\author[2]{St\'ephan \textsc{Thomass\'e}\thanks{Email: \texttt{stephan.thomasse@ens-lyon.fr} Partially supported by the ANR Project \textsc{Stint} under \textsc{Contract ANR-13-BS02-0007}
}}
\affil[2]{LIP, UMR 5668 ENS Lyon - CNRS - UCBL - INRIA, Universit\'e de Lyon, France.}
%\email{\tt<stephan.thomasse@ens-lyon.fr>}
\date{}
\renewcommand\Authands{ and }
\maketitle

\begin{abstract} 
Erd\H{o}s and Hajnal conjectured that, for every graph $H$, there exists a constant $c_H$ such that every graph $G$ on $n$ vertices which does not contain any induced copy of $H$ has a clique or a stable set of size $n^{c_H}$. We prove that for every $k$, there exists $c_k>0$ such that 
every graph $G$ on $n$ vertices not inducing a cycle of length at least $k$ 
nor its complement contains a clique or a stable set of size $n^{c_k}$.
\end{abstract}

\section{Introduction}

Let $G=(V,E)$ be a graph. In the following $n$ will denote the size of $V(G)$.
A class $\mathcal C$ of graphs (in this paper, a graph class is closed under
induced subgraphs) is said to satisfy the \emph{(weak) Erd\H{o}s-Hajnal property} if there 
exists some constant $c>0$ such that every graph in $\mathcal C$ on $n$ vertices contains a clique or a stable set of size $n^c$. The Erd\H{o}s-Hajnal conjecture~\cite{Erdo89} asserts that every strict class of graphs satisfies the Erd\H{o}s-Hajnal property. Alon, Pach and Solymosi proved in~\cite{Alon01} that the Erd\H{o}s-Hajnal conjecture is preserved by modules (a \emph{module} is a subset $V_1$ of vertices such that for every $x,y \in V_1$, we have $N(x) \setminus V_1 = N(y) \setminus V_1$): in other words, it suffices to prove that the class of graphs which do not contain any copy of $H$ satisfy the Erd\H{o}s-Hajnal property, for every prime graph $H$ (a \emph{prime graph} is a graph with only trivial modules). The conjecture is satisfied for every prime graph of size at most $4$. For $k=5$, the conjecture is satisfied for bulls~\cite{ChudnovskyS08} but remains open for two prime graphs: the path and the cycle on $5$ vertices. Recently, a new approach for tackling this conjecture has been introduced: forbidding both a graph and its complement. This approach provides a large amount of results for paths (see~\cite{BousquetLT13,Chud12,ChudMP,ChudSey13} for instance). In particular Bousquet, Lagoutte and Thomass\'e proved that, for every $k$, the class of graphs with no $P_k$ nor its complement satisfies the Erd\H{o}s-Hajnal property. A survey of Chudnovsky~\cite{Chud13} details all the known results about this conjecture.

In this paper, we explore the case where long holes and their complements are forbidden (a \emph{hole} is an induced cycle of length at least $4$). A long outstanding open problem due to Gyarf\'as~\cite{Gyar87} asks if, for every integer $k$, the class of graphs with no hole of length at least $k$ is $\chi$-bounded. Equivalently, can the chromatic number of a graph with no hole of length at least $k$ be bounded by a function of its maximal clique and $k$? This question is widely open, since it is even open to determine if a triangle-free graph with no long hole contains a stable set of linear size. Several links exist between Erd\H{o}s-Hajnal property and $\chi$-boundedness. In particular, if the chromatic number of any graph of a class $\mathcal{C}$ is bounded by a polynomial of the maximum clique, the  Erd\H{o}s-Hajnal property holds. Here, we prove that graphs which contain neither a hole of length at least $k$ nor its complement have the Erd\H{o}s-Hajnal property.

\begin{theorem}\label{thm:ehhole}
For every integer $k$, the class of graphs with no holes or anti-holes of length at least $k$ has the Erd\H{o}s-Hajnal property.
\end{theorem}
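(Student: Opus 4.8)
The plan is to prove, for a suitable constant $c_k>0$ and by induction on $n$, the induction-friendly inequality $\w(G)\cdot\alpha(G)\ge n^{c_k}$ for every $G$ on $n$ vertices in the class; this implies Theorem~\ref{thm:ehhole} with exponent $c_k/2$, and since the class is closed under complementation, $\w$ and $\alpha$ play symmetric roles throughout. The mechanism of the induction is a multiplicativity property: if $G$ has disjoint sets $A,B$ with $|A|,|B|\ge\eps n$ that are complete or anticomplete to each other, then combining the largest clique (resp.\ stable set) of $G[A]$ with that of $G[B]$, together with the inductive bounds for $G[A]$ and $G[B]$, yields $\w(G)\alpha(G)\ge n^{c_k}$ as soon as $c_k$ is small enough in terms of $\eps$; the elementary inequality $a^{c}+b^{c}\ge(a+b)^{c}$ for $c\le1$ also lets us dispose of disconnected $G$ and, with a little care, of a cut vertex, so we may assume $G$ and $\overline G$ are $2$-connected. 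Finally, an induced path (resp.\ anti-path) on $m$ vertices contains a stable set (resp.\ clique) of size $\lceil m/2\rceil$, so we may assume $G$ has no induced path and no induced anti-path on more than $2n^{c_k}$ vertices; and if $G$ contains neither an induced path nor an induced anti-path on $g(k)$ vertices, for a large threshold $g(k)$ to be chosen, then $G$ is $(P_{g(k)},\overline{P_{g(k)}})$-free and we finish by the theorem of Bousquet, Lagoutte and Thomass\'e. Replacing $G$ by $\overline G$ if necessary, we are reduced to the case where $G$ has an induced path $P=p_1\cdots p_m$ with $g(k)\le m<2n^{c_k}$.

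The heart of the proof is the analysis of how the $n-m$ vertices outside $P$ attach to $P$, using both forbidden structures. If $x\notin P$ has a neighbour on $P$, let $p_{l(x)}$ and $p_{r(x)}$ be the first and the last; then $x$ cannot be non-adjacent to $k-3$ consecutive vertices of $P$ strictly between them, for a maximal run $p_{i+1},\dots,p_{j-1}$ of non-neighbours of $x$ flanked by neighbours $p_i,p_j$ produces the induced cycle $x\,p_i\,p_{i+1}\cdots p_j\,x$ of length $j-i+2<k$. Thus, cutting $P$ into consecutive windows of $k$ vertices, the windows met by $x$ form an interval, and on the segment it spans $x$ misses no window entirely; in particular $x$ is complete to a stable subset of that segment of size roughly $(r(x)-l(x))/k$. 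The decisive companion fact is: if $x,y\notin P$ attach to $P$ only inside windows whose index-ranges differ by at least $k$ --- say $x$ only to the left of $p_a$ and $y$ only to the right of $p_b$ with $b\ge a+k$ --- then $x$ is not adjacent to $y$, since otherwise $x\,p_a\,p_{a+1}\cdots p_b\,y\,x$ is an induced cycle of length at least $k$. Passing to $\overline G$, where $P$ induces an anti-path and ``no long anti-hole'' reads ``no long hole'', yields the dual statements; this is the form in which the anti-hole hypothesis is used, namely to control the vertices that span a \emph{long} segment of $P$.

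With these facts, the induction step becomes a dichotomy on the attachment profiles of the vertices of $V(G)\setminus P$ (the window-interval they meet on $P$, or ``empty'' if they are anticomplete to $P$). If a linear fraction of $V(G)$ splits into two families whose window-intervals are separated by at least $k$ positions, each family of linear size, the companion fact makes the two families anticomplete and we have a balanced split. Otherwise the attachments concentrate: after an iterated bisection of $P$ we localise a set $S$ with $|V(G)\setminus P|-|S|$ small (so $|S|\ge n-o(n)$) whose vertices all attach inside one window $W$ of $O(k)$ vertices; then $S$ is anticomplete to the induced sub-path $P\setminus W'$, with $W'$ a slight enlargement of $W$, of size $\ge m-O(k)$, and the pair $(S,P\setminus W')$ together with the inductive bound on $G[S]$ and the stable set carved out of $P\setminus W'$ closes the induction, because $|S|^{c_k}$ differs from $n^{c_k}$ only by a lower-order term that the $\Omega(m)$ gain from $P\setminus W'$ absorbs. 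Vertices spanning a long segment of $P$, and the residual configurations, are handled symmetrically through the anti-hole side.

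The step I expect to be the real obstacle is exactly this concentration analysis: organising the bisection and the pigeonholing of attachment profiles so that \emph{every} outcome is either a genuinely balanced complete/anticomplete pair or a reduction with no net loss, and in particular arranging that the anti-hole hypothesis handles precisely those vertices whose attachment to $P$ is neither confined to a bounded window nor spread as a long segment --- the cases the hole hypothesis alone cannot resolve. Choosing $g(k)$, $\eps$ and $c_k$ consistently so that the induction closes uniformly in $n$ is then routine.
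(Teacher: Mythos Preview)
Your plan has a genuine gap, and it lies precisely where you suspect: the ``long segment'' vertices. The claim that ``passing to $\overline G$ \dots\ yields the dual statements'' is not usable as stated, because $P$ is an induced path in $G$ but an anti-path in $\overline G$, and none of your attachment lemmas survive that change. Concretely, take $G$ to be an induced path $P=p_1\cdots p_m$ together with a stable set $X$ that is complete to $P$. Every $x\in X$ spans the whole of $P$, so you are squarely in case~(C); yet this $G$ has no hole of length~$\ge 5$ and no anti-hole of length~$\ge 5$ (the complement is a clique $X$ disjoint from $\overline{P_m}$, and $\overline{P_m}$ contains no induced $C_\ell$ for $\ell\ge 5$). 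So the anti-hole hypothesis gives you nothing here, and there is no complete or anticomplete pair of linear size to feed your induction. Of course $\alpha(G)\ge|X|$ in this toy example, but the point is that your \emph{mechanism} does not detect it: nothing in the dichotomy you set up produces that stable set, and more elaborate versions of this construction (replace the stable set $X$ by any graph in the class) show that the ``long segment'' vertices cannot be dismissed by a symmetry argument.

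The paper sidesteps this issue by a different route. It first invokes the R\"odl/Fox--Sudakov density theorem to pass to an $\varepsilon$-stable set (or, by complementing, an $\varepsilon$-clique) of linear size, and then strips high-degree vertices to obtain a \emph{sparse} induced subgraph $G'$ with maximum degree at most $\varepsilon n$. In a sparse graph there simply are no ``long segment'' vertices in bulk: any $k$-window of a dominating induced path has only $O(k\varepsilon n)$ neighbours, so the Inside of each window is tiny and your Left/Right split (this is the paper's Lemma~4) works cleanly. To obtain such a dominating path the paper runs a DFS-type algorithm producing a dominating rooted tree $T$ with the key property (Lemma~2) that adjacent vertices of $G'$ have related roots in $T$; a weight argument on $T$ (Lemma~3) then yields either a heavy rooted path, whence a dominating induced path, or two heavy unrelated sets, which are automatically anticomplete. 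Either way one gets a linear empty bipartite pair, i.e.\ the \emph{strong} Erd\H os--Hajnal property, and the weak one follows. The missing ingredient in your approach is exactly this sparsification step (or something playing the same role): without it, the attachment analysis along $P$ cannot be completed.
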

The remaining of this paper is devoted to a proof of Theorem~\ref{thm:ehhole}.

\section{Dominating tree}

Let $G=(V,E)$ be a connected graph. The \emph{neighborhood} of a set of vertices $X$, denoted by $N(X)$, 
is the set of vertices at distance one from $X$. The \emph{closed neighborhood} of $X$ is $\overline{N}(X)=N(X) \cup X$. By abuse of notation we drop the braces when $X$ contains a single element.
We select a root $r$ in $V$. 
Let $X$ be a set of vertices in $G$. A vertex $y$ in $N(X)$ is {\it active}
for $X$ if it has a neighbor which is not in $\overline{N}(X)$.
The following algorithm returns a subtree $T$ of $G$ rooted at 
$r$ which dominates $G$, i.e. such that every vertex 
of $G$ is at distance at most 1 of $T$.\\

\begin{algorithm}[H] \caption{Find a dominating tree.}\label{algo1}

\begin{algorithmic} 
\REQUIRE A graph $G$, a root $r$.
\ENSURE A dominating tree $T$ rooted at $r$.

\STATE 1 STACK:= $\{r\}$; $T:=\{r\}$

2 While STACK is non empty do

3 \hspace{0.6cm}$x:=$top(STACK)

4 \hspace{0.6cm}If there exists $y$ active for $T$ such that the only neighbor of $y$ in STACK is $x$ 

5 \hspace{1.2cm} Add $y$ to the top of STACK 

6 \hspace{1.2cm} Set $x$ as the father of $y$ in the tree $T$

7 \hspace{0.6cm}Else remove $x$ from STACK and keep track of the deletion order

8 Return $T$ and the deletion order %$<$
\end{algorithmic}
\end{algorithm}

First note that there are two orders on $T$ inherited from Algorithm~\ref{algo1}: the descendance order (well defined since $T$ is rooted) and the deletion order. Observe that STACK is at every step of the algorithm an induced path originated from the root. Indeed, when a vertex is added to STACK its only neighbor in STACK is the top vertex. Conversely, every path in $T$ containing the root $r$ corresponds to the set of vertices in STACK at a given step of the algorithm.
Note also that $T$ dominates $G$ at the end of the algorithm (since $G$ is connected).

\begin{figure}[!h]
\centering
\begin{tikzpicture}[scale=1.3]
\tikzstyle{whitenode}=[draw,circle,fill=white,minimum size=8pt,inner sep=0pt]
\tikzstyle{blacknode}=[draw,circle,fill=black,minimum size=6pt,inner sep=0pt]
\tikzstyle{invisible}=[draw=white,circle,fill=white,minimum size=6pt,inner sep=0pt]
\tikzstyle{tnode}=[draw,ellipse,fill=white,minimum size=8pt,inner sep=0pt]
\tikzstyle{texte} =[fill=white, text=black]

\draw (0,0) node[blacknode] (r) [label=90:$r$] {}
-- ++(-90-45:1.414cm) node[blacknode] (x1) [label=90:$u$] {};

\draw (r)
-- ++(-90:1cm) node[blacknode] (x2) {};
\draw (r)
-- ++(-45:1.414cm) node[blacknode] (x3)  [label=90:$w$] {};

\draw (x1) edge node  {} (x2);

\draw (x1)
-- ++(-90-45:1.414cm) node[blacknode] (y1) {};
\draw (x1)
-- ++(-90:1cm) node[blacknode] (y2) {}
-- ++(-90:1cm) node[blacknode] (z1) {};
\draw (x1)
-- ++(-45:1.414cm) node[blacknode] (y3) [label=90:$v$] {};

\draw (z1) edge node  {} (y1);
\draw (z1) edge node  {} (y3);

\draw (x3)
-- ++(-45:1.414cm) node[blacknode] (y4) {};

\draw (y3) edge node  {} (x3);

\draw[line width=2pt] (r) edge node  {} (x1);
\draw[line width=2pt] (x1) edge node  {} (y1);
\draw[line width=2pt] (r) edge node  {} (x3);

\end{tikzpicture}
\caption{An illustration of Algorithm 1. The tree $T$ is represented with thick edges. The vertices $u$ and $w$ are unrelated, and we have $r(v)=u$.}
\label{fig:algo}
\end{figure}
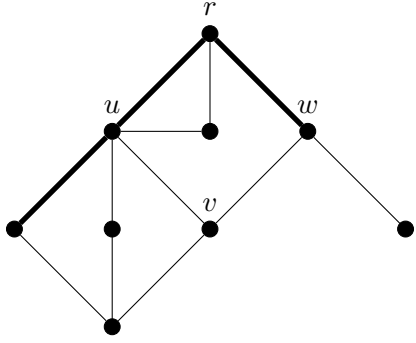

For every set $N$ of nodes of $T$, we denote by $m(N)$
the minimal nodes of $N$ with respect to the descendance order.
In other words, $m(N)$ is a minimum subset of $N$
such that every node of $N$ is a descendant in $T$ of some
(unique) node of $m(N)$. The {\it root} $R(N)$ of $N$ is the minimum element of $m(N)$ with respect to the deletion order.
Equivalently, if we picture the tree $T$ as being built from top to bottom and left to right (when a new vertex is added in STACK, it is drawn just beneath its only neighbor $x$ in $T$, and to the right of any other neighbor of $x$), then $R(N)$ is the leftmost top element of $N$ in $T$.

To any vertex $v$ in $G$, we associate a node $r(v)$ in $T$
by letting $r(v):=R(\overline {N}(v)\cap T)$ (see Figure~\ref{fig:algo} for an illustration). Note that when $v$ is a node
of $T\setminus r$, the vertex $r(v)$ is the father of $v$. Observe also that 
$r(r)=r$, and that $r(x)$ is well defined since $T$ is a dominating set.
Two nodes of the tree are {\it related} if one is a descendant 
of the other, otherwise they are {\it unrelated}. By extension, two sets $A$ and $B$ of nodes are {\it unrelated} if every $a\in A$ and $b\in B$ are unrelated. 

\begin{lemma}\label{keylemma}
If $xy$ is an edge of $G$, then $r(x)$ and $r(y)$ are related.
\end{lemma}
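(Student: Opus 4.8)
The plan is to first reformulate $r(v)$ combinatorially. Since STACK stays at every step a path originated at $r$, and the stack is handled with a standard push/pop discipline in which no node is pushed twice (a node already in $T$ is not in $N(T)$, hence never re-selected), Algorithm~\ref{algo1} explores $T$ like a depth-first search: the order in which nodes are pushed is a preorder of $T$, and the deletion order is the matching postorder. I claim that $r(v)$ is exactly the \emph{first} node of $\overline N(v)\cap T$ to be pushed onto STACK. Indeed, writing $N:=\overline N(v)\cap T$, the first node $a_0$ of $N$ to be pushed has no ancestor in $N$ (ancestors are pushed earlier), so $a_0\in m(N)$; and since the nodes of $m(N)$ are pairwise unrelated and STACK is always a root-path, the whole subtree of $a_0$ is explored, and $a_0$ deleted, before any other node of $m(N)$ is even pushed, so $a_0=R(N)=r(v)$. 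Two consequences I will use: no node of $\overline N(v)$ is a proper ancestor of $r(v)$, and no node of $\overline N(v)\cap T$ is pushed before $r(v)$.

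I would then argue by contradiction: suppose $xy\in E$ while $a:=r(x)$ and $b:=r(y)$ are unrelated, and, by symmetry of the statement, that $a$ is pushed before $b$. Being unrelated to $a$ and pushed later, $b$ is pushed only after the entire lifetime of $a$ on STACK is over (during that lifetime $a$ lies on STACK, which is a root-path, so $b$ cannot). Consider the step at which the algorithm deletes $a$: then STACK is precisely the path of $T$ from $r$ to $a$, and every node already in $T$ at that moment (call that tree $T_0$) was pushed before $b$ is; hence $b\notin T_0$, and --- since $b$ is the first node of $\overline N(y)\cap T$ to be pushed --- no node of $T_0$ lies in $\overline N(y)$. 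In particular $y$ has no neighbour in $T_0$ and $y\notin T_0$, so $y\notin\overline N(T_0)$.

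To finish I would locate $x$ relative to $a$, using only that $a\in\overline N(x)$. If $x$ is a strict descendant of $a$: when $x$ was pushed its only neighbour on STACK was its father, so among its proper ancestors $x$ is adjacent only to its father; since $x$ is adjacent to $a$, this forces $a$ to be the father of $x$, whence $x$ is pushed while the subtree of $a$ is explored, hence before $b$ --- but $xy\in E$ puts $x\in\overline N(y)\cap T$, contradicting that $b$ is the first such node. If $x=a$ and $x\neq r$, the father of $x$ is a node of $\overline N(x)$ that is a proper ancestor of $a=r(x)$, a contradiction; and if $x=a=r$, then $a$ is an ancestor of $b$, again a contradiction. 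In the remaining case $x$ is neither $a$ nor a descendant of $a$, and then $x$ has not been pushed by the time $a$ is deleted (if $x\notin T$ this is trivial; if $x\in T$, then $x\in\overline N(x)\cap T$ is pushed not before $a$, but being distinct from $a$ and outside the subtree of $a$ it is in fact pushed only after $a$ is deleted). Since $x$ is adjacent to $a\in T_0$, we get $x\in N(T_0)$; and since $x$ has the neighbour $y\notin\overline N(T_0)$, the vertex $x$ is active for $T_0$. As the algorithm nonetheless deletes $a$ rather than pushing $x$, the vertex $x$ must have on STACK a neighbour other than $a$, that is, a proper ancestor of $a$ in $T$ --- but that is a node of $\overline N(x)\cap T$ pushed before $a=r(x)$, the final contradiction.

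The step I expect to be most delicate is the analysis at the deletion of $a$: one has to extract \emph{simultaneously} that $y$ has escaped $\overline N(T_0)$ (so that the endpoint lying outside the current tree is genuinely active, which is what lets one invoke the failure of line~4) and that $a$ is still the earliest node of $\overline N(x)$ pushed so far (so that producing a neighbour of $x$ strictly above $a$ really is impossible). The ``strict descendant'' case also needs a little care, since it is precisely the induced-path property of STACK that pins $a$ down as the father of $x$.
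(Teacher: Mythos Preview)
Your proof is correct and follows essentially the same line as the paper's: both hinge on looking at the instant $r(x)$ is popped, observing that $y$ then has no neighbour in the current tree (so $x$ is active for it with sole stack-neighbour $r(x)$), and deriving a contradiction with the pop. Your version is more explicit---you isolate and prove the characterisation of $r(v)$ as the first-pushed node of $\overline N(v)\cap T$, and you separate out the sub-case $x\in T$ descendant of $a$---whereas the paper compresses all of this into a couple of sentences, but the argument is the same.
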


\begin{proof} 
Without loss of generality, we can assume that $r(x)$ was first added in STACK. If $r(x)=x$ then $r(x)$ is $r$ and $r(x)$ and $r(y)$ are related, so we can assume that $r(x) \neq x$. If $r(y)$ is added in STACK before $r(x)$ is deleted, then they are related (all the nodes added in STACK between the addition and the deletion of $x$ are descendants of $x$ since $x$ is on paths from $r$ to these nodes). If $r(y)$ has not been added in STACK when $r(x)$ is deleted, then $x$ is an active neighbor of $r(x)$, as $y$ is a neighbor of $x$ and has no neighbor in the current $T$ (otherwise $r(y)$ would already have appeared in $T$). Consequently, $x$ is added to STACK, a contradiction to the fact that $y$ has no neighbor in $T$ until $r(y)$ is added to STACK.
\end{proof}

\section{Dominating path}

We think that the following result holds for $\frac{1}{3}$, but in our case this easy proof for $\frac{1}{4}$ suffices.

\begin{lemma} \label{weighttree}
Let $T$ be a tree and $w$ be a nonnegative weight function defined on the nodes of
$T$. We assume that $w(T)=1$. Then there is a path $P$ from the root 
with weight at least $\frac{1}{4}$ (i.e. $w(V(P))\geq \frac{1}{4}$) or two unrelated sets 
$A$ and $B$ both with weight at least $\frac{1}{4}$.
\end{lemma}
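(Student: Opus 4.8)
The plan is to root $T$ at $r$ and, writing $f(v)$ for the total weight $w(V(T_v))$ of the subtree hanging at $v$ (so $f(r)=1$), to single out one canonical root-to-leaf path and show that either this path is heavy or the pieces that fall off of it can be grouped into two heavy unrelated sets.

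First I would build the path $P=(r=p_0,p_1,\dots,p_\ell)$ greedily: given $p_i$, let $p_{i+1}$ be a child of $p_i$ maximising $f$, stopping when $p_\ell$ is a leaf. Two elementary facts set the stage. Since $T_{p_{i+1}}$ is a subtree of $T_{p_i}$ avoiding $p_i$, the sequence $f(p_0)\ge f(p_1)\ge\cdots$ is non-increasing, so $f(p_i)\le 1$ throughout. Next, call a node a \emph{branch} if it is a child of some $p_i$ other than $p_{i+1}$. Every vertex of $T$ not on $P$ lies in $T_c$ for exactly one branch $c$ (follow it up to the deepest $p_i$ weakly above it; the child of $p_i$ it sits under cannot be $p_{i+1}$, by maximality of $i$), and for distinct branches $c,c'$ the subtrees $T_c$ and $T_{c'}$ are pairwise unrelated (either they sit in distinct branches of a common $p_i$, or, with $c$ at $p_i$ and $c'$ at $p_j$ with $i<j$, we have $T_{c'}\subseteq T_{p_j}\subseteq T_{p_{i+1}}$, which is disjoint from $T_c$). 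Consequently $\sum_{c\ \mathrm{branch}}f(c)=1-w(V(P))$.

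The key quantitative point is that every branch is light: if $c$ is a branch at $p_i$, then $p_i$ has both children $p_{i+1}$ and $c$, so $f(p_i)\ge w(p_i)+f(p_{i+1})+f(c)\ge 2f(c)$ because $f(p_{i+1})\ge f(c)$; hence $f(c)\le f(p_i)/2\le 1/2$. Now dichotomize. If $w(V(P))\ge 1/4$, the path $P$ itself is the desired object. Otherwise $\sigma:=\sum_{c\ \mathrm{branch}}f(c)=1-w(V(P))>3/4$, and I split the branches into two classes $\mathcal A,\mathcal B$ each of total weight at least $1/4$: if some branch $c_0$ has $f(c_0)\ge 1/4$, take $\mathcal A=\{c_0\}$ and note $\sigma-f(c_0)>3/4-1/2=1/4$ since $f(c_0)\le 1/2$; otherwise every branch has weight below $1/4$, so throwing branches into $\mathcal A$ until its total first reaches $1/4$ leaves it below $1/2$, whence $\mathcal B$ retains weight $\sigma-\tfrac12>3/4-1/2=1/4$. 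Taking $A=\bigcup_{c\in\mathcal A}V(T_c)$ and $B=\bigcup_{c\in\mathcal B}V(T_c)$ gives two unrelated sets of weight at least $1/4$ each.

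The only real decision, and the step I expect to matter, is to run the greedy path all the way down to a leaf rather than halting at a centroid-like vertex: that is exactly what forces each branch subtree to weigh at most $f(\mathrm{parent})/2\le 1/2$, the inequality that makes the final two-class split go through. The remainder is bookkeeping about the tree decomposition and an easy packing argument; and as the remark before the lemma signals, the estimate $f(c)\le 1/2$ is the lossy step that presumably could be tightened to yield $1/3$ in the conclusion.
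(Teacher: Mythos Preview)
Your proof is correct and follows essentially the same approach as the paper: build the greedy heaviest-subtree path from the root, and if it is light, split the off-path subtrees into two unrelated heavy sets. The only cosmetic difference is in the case where some branch subtree has weight at least $\tfrac14$: the paper pairs it directly with the sibling subtree $T_{p_{i+1}}$ (which is at least as heavy by the greedy choice), whereas you instead invoke your bound $f(c)\le\tfrac12$ to show the remaining branches together exceed $\tfrac14$; both work.
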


\begin{proof} 
We grow a path $P$ from the root $r$ by inductively 
adding to the endvertex $x$ of $P$ the root of the heaviest subtree among the sons of $x$. 
If the path $P$ has weight at least $\frac{1}{4}$, we are done. 

Otherwise if there exists a connected component $A$ (i.e. a subtree) of $T\setminus V(P)$ which has
weight at least $\frac{1}{4}$, we conclude as follow: the father $x$ of
the root $z$ of $A$ belongs to $P$. In particular, we did not choose
$z$ to extend $P$ from $x$. Thus $x$ has another son which is the root 
of a subtree $B$ of weight at least $\frac{1}{4}$. We now have our $A$ and $B$ (they are unrelated 
since the two subtrees do not intersect).

In the last case, every component of $T\setminus V(P)$ has weight less that $\frac{1}{4}$, and the 
total weight of $T\setminus V(P)$ is at least $\frac{3}{4}$. We can then group
the components of $T\setminus V(P)$ into two unrelated sets of weight 
at least $\frac{1}{4}$. 
Indeed we iteratively add components until their union $A$ weighs at least $\frac{1}{4}$. Considering that each component weighs less than $\frac{1}{4}$, the weight of $A$ is less than $\frac{1}{2}$.
Since the vertices of the path $P$ have weight less than $\frac{1}{4}$, the remaining connected components have weight at least $\frac{1}{4}$, which gives $B$.
\end{proof}

Let us say that a graph $G$ on $n$ vertices is {\it sparse} if either its maximum
degree is at most $\varepsilon n$ for some small $\varepsilon$,
or it has no triangle. All the sparse graphs we will consider here satisfy the first hypothesis, but we choose that looser notion of sparsity in order to make Lemma~\ref{sparsedominatingpath} more general. 

In a graph $G$, a \emph{complete $\ell$-bipartite 
graph} is a pair of disjoint subsets $X,Y$ of vertices of $G$, both of size $\ell$
and inducing all edges between $X$ and $Y$. We define similarly \emph{empty $\ell$-bipartite graph} when there is no edge between $X$ and $Y$. Observe that we do not require any condition inside $X$ or $Y$. A class of graphs $\mathcal{C}$ has the \emph{strong Erd\H{o}s-Hajnal property}, introduced in~\cite{FoxPach08} if there exists a constant $c>0$ such that every graph of $\mathcal{C}$ contains an empty $cn$-bipartite graph or a complete $cn$-bipartite graph. As we will see later, the strong Erd\H{o}s-Hajnal property implies the Erd\H{o}s-Hajnal property. The remaining of the proof consists in showing that the class of graphs with no hole and no anti-hole of length at least $k$ has the strong Erd\H{o}s-Hajnal property. 
%Remember that a hole of length $k$ is an induced cycle of length $k$, while an anti-hole of length $k$ is the complement of an induced cycle of length $k$.

\begin{lemma} \label{sparsedominatingpath}
Let $G$ be a sparse graph with no hole of length at least $k$, that admits a dominating induced path $P$. Then $G$ contains an empty $cn$-bipartite graph. Here $c$ depends of the coefficient $\varepsilon$ of sparsity, and $k$.
\end{lemma}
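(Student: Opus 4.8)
We have a sparse graph $G$ on $n$ vertices with no hole of length $\geq k$, together with a dominating induced path $P = p_1 p_2 \cdots p_m$. Since $P$ dominates $G$, every vertex of $G$ lies in some $\overline{N}(p_i)$. The idea is to bucket the vertices of $G$ according to which part of $P$ they attach to, and then exploit the fact that a vertex with two neighbors far apart on $P$ would close a long hole.

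**Step 1: localize each vertex to a short interval of $P$.** For a vertex $v \notin P$, let $I(v)$ be the set of indices $i$ with $p_i \in N(v)$. I claim that $I(v)$ is contained in an interval of length less than $k$ (i.e., $\max I(v) - \min I(v) < k$, say). Indeed, if $v$ has neighbors $p_a$ and $p_b$ on $P$ with $b - a$ large, then $v$ together with the subpath $p_a p_{a+1} \cdots p_b$ forms a cycle; taking a chordless sub-cycle through $v$ (choosing $p_a$ and $p_b$ to be the first and last neighbors of $v$ on the sub-interval, iterating) yields an induced hole of length $\geq k$, a contradiction. So every vertex outside $P$ — and trivially every vertex on $P$ — attaches to $P$ within a window of bounded length $O(k)$.

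**Step 2: partition $P$ into blocks and find a heavy pair of far-apart blocks.** Cut $P$ into consecutive blocks $B_1, B_2, \dots$ each of $k$ (or a few times $k$) consecutive vertices. Assign each vertex $v$ of $G$ to the block containing $\min I(v)$ (for $v \in P$, the block containing $v$). By Step 1, the closed neighborhood structure is such that a vertex assigned to block $B_j$ only sees blocks $B_{j-1}, B_j, B_{j+1}$ on $P$. Since $P$ dominates $G$, the $n$ vertices of $G$ are distributed among $\lceil m/k \rceil$ blocks. If $m$ is small (a bounded number of blocks), then one block $B_j$ receives $\geq c' n$ vertices, and inside $V(B_j)$ — a set of $\Theta(k)$ path vertices plus $c'n$ attached vertices — sparsity will be used directly (see Step 4) to find an empty $cn$-bipartite graph. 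If $m$ is large, pick two blocks $B_j$ and $B_{j'}$ with $|j - j'| \geq 3$, each receiving many vertices; then $X = $ (vertices assigned to $B_j$) and $Y = $ (vertices assigned to $B_{j'}$) have \emph{no edges between them}: an edge $xy$ with $x$ attached near $B_j$ and $y$ near $B_{j'}$ would, together with the subpath of $P$ joining their attachment points, again create a long induced hole (route the path, take a chordless sub-cycle through the edge $xy$). Averaging over pairs of blocks, some such pair has $|X|, |Y| \geq c n$ for a constant $c = c(k)$, giving the desired empty $cn$-bipartite graph.

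**Step 3: the main obstacle — a single dominant block.** The hard case is when essentially all of $G$ piles onto a constant number of blocks of $P$, so the "far apart" trick is unavailable. Here we have a set $W$ of $\geq c' n$ vertices, each adjacent on $P$ only within a fixed window of $O(k)$ consecutive path vertices $p_a, \dots, p_{a+O(k)}$. Since this window has bounded size, by pigeonhole a $c''n$-subset $W'$ of $W$ all have the \emph{same} neighborhood on that window — in particular all are adjacent to a common $p_i$ or all non-adjacent across the whole window (using the domination, each $v \in W$ has at least one neighbor there). Now I invoke sparsity on $W'$: if $G$ is triangle-free, then the common neighbor $p_i$ has its neighborhood $N(p_i) \supseteq$ a large chunk of $W'$ inducing \emph{no edges}, and we split it into two halves to get an empty $cn$-bipartite graph. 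If instead $\Delta(G) \leq \varepsilon n$, then no single $p_i$ can be adjacent to more than $\varepsilon n$ vertices of $W'$; since the window has $O(k)$ vertices, $|W'| \leq O(k)\varepsilon n$, which forces $c'' = O(k\varepsilon)$ small — and by choosing the block size and the constant $c$ appropriately relative to $\varepsilon$ and $k$, this case simply cannot produce a dominant block with more than $cn$ vertices, so we are already in the many-blocks situation of Step 2. Balancing these thresholds — block length versus $\varepsilon$ versus the final $c$ — is the delicate bookkeeping, and is where the dependence of $c$ on $\varepsilon$ and $k$ enters.

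**Step 4: conclude.** Collecting the cases: either two far-apart blocks each carry $\geq cn$ vertices (Step 2, empty bipartite graph directly from the long-hole obstruction to cross-edges), or a bounded number of blocks carry everything and sparsity (triangle-freeness or bounded degree) forces a large empty bipartite graph inside a neighborhood (Step 3). In all cases $G$ contains an empty $cn$-bipartite graph with $c = c(\varepsilon, k) > 0$, proving Lemma~\ref{sparsedominatingpath}. The one routine point deferred throughout is the repeated "route a path on $P$ and extract a chordless sub-cycle through a prescribed edge or vertex" argument, which is a standard shortcutting lemma for induced paths and holes.
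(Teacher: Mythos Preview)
Your Step~1 localization claim is false, and this is what breaks the argument. You assert that $\max I(v)-\min I(v)<k$ because otherwise ``taking a chordless sub-cycle through $v$ \dots\ yields an induced hole of length $\geq k$.'' But shortcutting a cycle to make it chordless can make it \emph{short}: if $v$ is adjacent to $p_1,p_3,p_5,\dots,p_{2t-1}$ (perfectly compatible with triangle-freeness and with $k\geq 5$), then every induced cycle through $v$ and this stretch of $P$ is a $4$-cycle $v\,p_{2i-1}\,p_{2i}\,p_{2i+1}\,v$, yet the span of $I(v)$ is $2t-2$, which can be arbitrarily large. What the no-long-hole hypothesis actually gives you is only that \emph{consecutive} neighbours of $v$ on $P$ are within distance $k-2$; it says nothing about the total span.

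This propagates to Step~2. Your assignment by $\min I(v)$ places the vertex $v$ above in block $B_1$, but $v$ still has neighbours arbitrarily far to the right, so the claim ``a vertex assigned to $B_j$ only sees blocks $B_{j-1},B_j,B_{j+1}$'' fails. Worse, the cross-edge argument fails for the same reason: take $x$ adjacent to $p_1,p_3,\dots,p_{19}$ and $y$ adjacent to $p_{16},p_{18},p_{20}$, with $xy\in E$. Then $x$ is assigned to $B_1$, $y$ to $B_4$, the graph is triangle-free, and every induced cycle through the edge $xy$ has length $4$. So $X$ and $Y$ need not be anticomplete. The paper sidesteps this by never trying to localize individual vertices. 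Instead it fixes a single length-$k$ window $I\subseteq P$ and partitions $V\setminus P$ into \emph{left} (all $P$-neighbours left of $I$), \emph{right} (all $P$-neighbours right of $I$), and \emph{inside} (everything else). The long-hole argument is then applied only to vertices with \emph{no} neighbour in $I$, which guarantees the extracted cycle stays long; this forces every inside vertex to have a neighbour in $I$, so sparsity bounds $|\text{inside}|\leq\delta n$. Sliding $I$ to balance left and right finishes. Your blocks-and-buckets framework can be repaired along these lines (assign by $\max I(v)$ on one side and $\min I(v)$ on the other, and separately bound the large-span vertices via sparsity), but as written the key lemma you rely on is simply not true.
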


\begin{proof} 
Let us consider a subpath $I$ of $P$ of length $k$. We assume that $P$ is given in a left right order from one endpoint to the other. The vertices of $G\setminus P$ fall into three categories: the {\it left} of $I$ denotes the vertices with all neighbors
in $P$ at the left of $I$, the {\it right} of $I$ denotes the vertices with all neighbors
in $P$ at the right of $I$, and the {\it inside} of $I$ denotes the other vertices of $G\setminus P$. Observe that if a vertex has both a neighbor at the left and a neighbor at the right of $I$, but no neighbor in $I$, then there is a hole of length at least $k$. Since $P$ is a dominating set, a vertex inside of $I$ that has no neighbor in $I$ must have by definition both a neighbor at the left and a neighbor at the right of $I$, which provides a long hole. It follows that every vertex inside of $I$ has a neighbor in $I$. Similarly, note that there is no edge between the left of $I$ and the right of $I$. So the left of $I$ and the right of $I$ form an empty bipartite graph.\\
We claim that if $G$ is sparse, then the inside vertices cannot be too many. This is straightforward if the degree is bounded by $\varepsilon n$, since the inside vertices belong to the neighborhood of one of the $k$ vertices of $I$. If there is no triangle in $G$, then the neighborhood of every vertex of $I$ is a stable set, hence the neighborhood of $I$ has chromatic number at most $k$. Consequently, if the inside of $I$ is large, then there is a large stable set, and then empty bipartite graph in it. 
Since every sparse graph has maximum degree $\varepsilon n$ or has no triangle, we can assume that for every $I$, the inside of $I$ is bounded in size by some small $\delta n$.
Now take $I$ to be the rightmost $k$-subpath of $P$ that has more right vertices than left ones. Observe that both the left and the right of $I$ contain close to $(\frac{1}{2}-\delta)n$ vertices, hence a large empty bipartite graph.
\end{proof}

A graph on $n$ vertices is an {\it $\varepsilon$-stable set} if it has at most 
$\varepsilon {n\choose 2}$ edges. The complement of an $\varepsilon$-stable set is 
an {\it $\varepsilon$-clique}. Fox and Sudakov proved the following in~\cite{Fox08}. A stronger version of the following result was proved 
by R\"odl~\cite{Rodl86}.

\begin{theorem} \label{reg}
For every positive integer $k$ and every $\varepsilon >0$, there exists $\delta>0$ such that every graph $G$ on $n$ vertices satisfies one of the following:
\begin{itemize}
\item $G$ induces all graphs on $k$ vertices.
\item $G$ contains an $\varepsilon$-stable set of size at least $\delta n$.
\item $G$ contains an $\varepsilon$-clique of size at least $\delta n$.
\end{itemize}
\end{theorem}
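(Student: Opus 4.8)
The result is (essentially) R\"odl's theorem, and the natural route goes through three off-the-shelf tools: Szemer\'edi's regularity lemma, Ramsey's theorem, and the induced embedding (``counting'') lemma. Fix $\varepsilon>0$ and $k$. First I would pin down auxiliary constants, in this order: set $q:=\max(k,\lceil 4/\varepsilon\rceil)$ and $\alpha:=\varepsilon/4$; then pick a regularity parameter $\gamma>0$ small enough that (a) the irregularity graph produced by the regularity lemma applied with parameter $\gamma$ --- a graph on $M$ vertices with at most $\gamma M^2$ edges --- has an independent set of size at least the $3$-colour Ramsey number $R_3(q)$ (using the freedom to demand sufficiently many parts), and (b) $\gamma$ lies below the threshold at which the embedding lemma applies to $q$ pairs that are pairwise $\gamma$-regular of density in $[\alpha,1-\alpha]$. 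Then apply Szemer\'edi's lemma with parameter $\gamma$ to $G$, getting an equipartition into $M\le M_0(\gamma)$ clusters $V_1,\dots,V_M$ of common size $m\ge(1-\gamma)n/M$ with at most $\gamma M^2$ irregular pairs.

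Next I would clean up the partition and colour it. By (a), the irregularity graph on $\{1,\dots,M\}$ has an independent set $S$ with $|S|\ge R_3(q)$, so every pair of clusters indexed by $S$ is $\gamma$-regular; colour each such pair ``sparse'', ``middle'', or ``dense'' according to whether its density is $<\alpha$, in $[\alpha,1-\alpha]$, or $>1-\alpha$, and apply Ramsey's theorem to extract $q$ clusters $V_{i_1},\dots,V_{i_q}$ all of whose pairs receive the same colour. If the colour is ``middle'', then $k$ of these clusters are pairwise $\gamma$-regular with density bounded away from $0$ and $1$, and the induced embedding lemma builds, vertex by vertex, an induced copy of \emph{any} prescribed graph on $k$ vertices (at each step regularity leaves many candidates adjacent to, and many non-adjacent to, the already-chosen vertices); so $G$ induces all graphs on $k$ vertices, the first alternative. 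If the colour is ``dense'', let $U:=V_{i_1}\cup\dots\cup V_{i_q}$: then $|U|\ge q(1-\gamma)n/M_0\ge\delta n$ with $\delta:=q/(3M_0)$, and the non-edges inside $U$ number at most $\binom m2$ within each of the $q$ clusters plus at most $\alpha m^2$ across each of the $\binom q2$ cluster-pairs, i.e.\ at most $2(\tfrac1q+\alpha)\binom{|U|}2\le\varepsilon\binom{|U|}2$ in total, so $U$ is an $\varepsilon$-clique of size $\ge\delta n$. The ``sparse'' case is the same up to complementation and yields an $\varepsilon$-stable set.

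The argument uses only standard machinery, so I expect no conceptual obstacle; the two points I would be careful about are these. First, there is no elementary proof: a truly homogeneous set of linear size need not exist, and the $\varepsilon$-slack in ``$\varepsilon$-clique/$\varepsilon$-stable set'' is precisely what makes a regularity-type lemma unavoidable, so I would not try to replace Szemer\'edi's lemma by iterated Ramsey estimates. Second, the order of the quantifiers must be checked explicitly: $q,\alpha$ depend only on $\varepsilon,k$; $\gamma$ depends only on $\varepsilon,k$ (through $R_3(q)$ and the embedding threshold); $M_0$ depends only on $\gamma$, hence only on $\varepsilon,k$; and finally $\delta=q/(3M_0)$ depends only on $\varepsilon,k$ --- no circularity --- which is exactly what the statement asserts. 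The bound on $\delta$ is of tower type in $1/\varepsilon$ and $k$; a better bound would require the more hands-on argument of Fox and Sudakov, but for the qualitative statement the above suffices. (Instances with $n$ below any fixed constant are trivial, which disposes of the degenerate cases in the embedding step.)
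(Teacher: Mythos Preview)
The paper does not actually prove this theorem: it is quoted as a black-box result of R\"odl and of Fox--Sudakov, with only the remark that ``the proof of R\"odl is based on the Szemer\'edi regularity lemma'' and that Fox--Sudakov obtain the sharper bound $\delta=2^{-ck(\log 1/\varepsilon)^2}$ by a different method. Your sketch is exactly the R\"odl argument the paper alludes to---regularity, then Ramsey on a three-colouring of the regular pairs by density, then the induced embedding lemma in the ``middle'' case and a density count in the ``sparse''/``dense'' cases---and it is correct as written; your bookkeeping on the non-edge count $2(1/q+\alpha)\binom{|U|}{2}\le\varepsilon\binom{|U|}{2}$ and on the dependency chain $q,\alpha\to\gamma\to M_0\to\delta$ is sound. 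So there is no discrepancy to report: you have supplied the standard proof that the paper chose to cite rather than include, and you have correctly flagged that the tower-type $\delta$ is inferior to the Fox--Sudakov bound, which is immaterial for the qualitative use the paper makes of the result.
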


The proof of R\"odl is based on the Szemer\'edi's regularity lemma. The proof of Fox and Sudakov provides a much better estimate with $\delta = 2^{-ck (\log 1/\epsilon)^2}$ with a rather different method.
We now prove our main result: 

\begin{theorem} \label{path}
For every $k$, the class of graphs with no hole nor anti-hole of size at least $k$ has the strong Erd\H{o}s-Hajnal property.
\end{theorem}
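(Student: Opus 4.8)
The plan is to combine the regularity-type dichotomy of Theorem~\ref{reg} with the dominating-tree machinery of Sections~2 and~3. Fix $k$ and choose a small $\varepsilon_0=\varepsilon_0(k)>0$ to be pinned down at the end, and let $\delta=\delta(\varepsilon_0,k)$ be given by Theorem~\ref{reg}. Let $G$ be a graph on $n$ vertices in the class. Since $G$ has no hole of length at least $k$ it has no induced $C_k$, so it does not induce all graphs on $k$ vertices, and Theorem~\ref{reg} provides an $\varepsilon_0$-stable set or an $\varepsilon_0$-clique of size at least $\delta n$. The class is closed under complementation and an empty bipartite graph in $\overline G$ is a complete bipartite graph in $G$, so I may assume $G$ has an $\varepsilon_0$-stable set $S$ with $|S|\ge\delta n$ and aim to produce an empty $cn$-bipartite graph. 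As $G[S]$ has at most $\varepsilon_0\binom{|S|}{2}$ edges, all but at most $\sqrt{\varepsilon_0}|S|$ vertices of $S$ have degree at most $\sqrt{\varepsilon_0}|S|$ inside $S$; discarding the rest yields $S'\subseteq S$ with $|S'|\ge(1-\sqrt{\varepsilon_0})\delta n$ and $G[S']$ of maximum degree at most $2\sqrt{\varepsilon_0}|S'|$ (taking $\varepsilon_0\le\frac14$). If the components of $G[S']$ can be partitioned into two classes of total size at least $|S'|/4$ each, these form an empty $(|S'|/4)$-bipartite graph and we are done; otherwise some component $C$ of $G[S']$ has $|C|\ge|S'|/2$, and $G[C]$ is connected, has no hole of length at least $k$ (being an induced subgraph of $G$), and has maximum degree at most $4\sqrt{\varepsilon_0}|C|$.

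Run Algorithm~\ref{algo1} on $G[C]$ to get a dominating tree $T$ and the maps $v\mapsto r(v)$, and weight each node $t$ of $T$ by $w(t)=|\{v\in C:r(v)=t\}|/|C|$, so that $w(T)=1$. Apply Lemma~\ref{weighttree}. If it returns two unrelated sets $A,B$ of weight at least $\frac14$, then $V_A=\{v\in C:r(v)\in A\}$ and $V_B=\{v\in C:r(v)\in B\}$ are disjoint, each of size at least $|C|/4$, and Lemma~\ref{keylemma} forbids any edge between $V_A$ and $V_B$ (an edge $vv'$ would force $r(v)$ and $r(v')$ to be related, contradicting that $A$ and $B$ are unrelated); this is an empty $(|C|/4)$-bipartite graph.

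Otherwise Lemma~\ref{weighttree} returns a root-path $P$ of weight at least $\frac14$. Then $P$ is an induced path of $G[C]$, since it coincides with STACK at some step of Algorithm~\ref{algo1}. Put $W=\{v\in C:r(v)\in V(P)\}$, so $|W|\ge|C|/4$. Because $r(t)$ is the father of each node $t$ of $T$ and $r(r)=r$, we have $V(P)\subseteq W$; and since $r(v)\in\overline N(v)$ for every $v$, each vertex of $W$ lies on $P$ or is adjacent to a vertex of $P$. Hence $P$ is a dominating induced path of $G[W]$. Now $G[W]$ is an induced subgraph of $G$, so it has no hole of length at least $k$, and its maximum degree is at most $4\sqrt{\varepsilon_0}|C|\le 16\sqrt{\varepsilon_0}|W|$; choosing $\varepsilon_0$ small enough that $16\sqrt{\varepsilon_0}<\frac1{2k}$ makes $G[W]$ sparse in the required sense, so Lemma~\ref{sparsedominatingpath} yields an empty $c'|W|$-bipartite graph with $c'=c'(k)>0$. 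Since $|W|\ge|C|/4\ge|S'|/8\ge(1-\sqrt{\varepsilon_0})\delta n/8$, this is an empty $cn$-bipartite graph for a constant $c=c(k)>0$; taking $c$ to be the minimum of the constants arising at the three places a bipartite graph was produced (and shrinking it to cover small $n$) finishes the proof.

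The main obstacle is controlling the sparsity coefficient through the two restrictions $G\supseteq G[S']\supseteq G[W]$: passing to an induced subgraph can only worsen the ratio of maximum degree to number of vertices, so $\varepsilon_0$—and with it $\delta$ and the final $c$—must be fixed as a function of $k$ alone, small enough that $G[W]$ still has sparsity coefficient below the threshold $\frac1{2k}$ that Lemma~\ref{sparsedominatingpath} implicitly needs to give a positive constant. This is possible precisely because $|S'|$ and $|W|$ lose only constant factors relative to $n$. A secondary point is that Algorithm~\ref{algo1} requires a connected input, which is why the component analysis on $G[S']$ must be carried out first.
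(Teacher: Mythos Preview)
Your proof is correct and follows essentially the same route as the paper: apply Theorem~\ref{reg} to reduce (up to complementation) to a sparse induced subgraph, pass to a large connected piece, run Algorithm~\ref{algo1}, weight the dominating tree and invoke Lemma~\ref{weighttree}, then conclude via Lemma~\ref{keylemma} in the unrelated-sets case and via Lemma~\ref{sparsedominatingpath} in the heavy-path case. Your version is in fact more careful than the paper's sketch in two places: you explicitly verify that $V(P)\subseteq W$ and that $P$ dominates $G[W]$, and you track how the sparsity coefficient degrades when restricting from $S'$ to $C$ to $W$, fixing $\varepsilon_0$ accordingly---points the paper leaves implicit.
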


\begin{proof} Let us first prove that we can restrict the problem to sparse connected graphs without long holes. 
Indeed, since $G$ contains no hole of size $k$, it does not induce all graphs on $k$ vertices, and Theorem~\ref{reg} ensures that $G$ contains an $\epsilon$-clique or an $\epsilon$-stable set of linear size. If $G$ contains an $\epsilon$-stable set $X$, then we delete all the vertices of degree at least $2\epsilon |X|$. Since the average degree is at most $\epsilon$, at most one half of the vertices are deleted. The remaining vertices have maximum degree at most $2\epsilon$, which provides a $4\epsilon$-sparse graph.
If $G$ contains an $\epsilon$-clique of linear size, then $\overline{G}$, which also satisfies the theorem hypotheses, contains a linear-size $\epsilon$-stable set. Thus $\overline{G}$ contains an empty or complete linear-size bipartite graph, and symmetrically, so does $G$.
Finally, we can assume that $G$ is connected: it suffices to apply the theorem on a large connected component if any, or to assemble the connected components in order to get a large empty bipartite graph.\\
Let $G$ be a connected sparse graph with no long hole. We consider the tree $T$ resulting from our algorithm with an arbitrary root. To every node $v$ of $T$ we associate a weight equal to the number of vertices $x$ of $G$ with $r(x)=v$. Note that the total weight equals $n$. By Lemma~\ref{weighttree}, we find in $T$ a path with weight at least $\frac{n}{4}$ or two unrelated subsets of size at least $\frac{n}{4}$. In the first case, the graph $G$ contains a subgraph of size $\frac{n}{4}$ which is dominated by an induced path, and we conclude using Lemma \ref{sparsedominatingpath}. The second case yields an empty $\frac{n}{4}$-bipartite graph, as Lemma \ref{keylemma} ensures that there is no edge between vertices in two unrelated sets.
\end{proof}

Finally, we can prove Theorem~\ref{thm:ehhole} using the following classical result due to Alon et al.~\cite{Alon05} and Fox and Pach~\cite{FoxPach08}.

\begin{theorem}[\cite{Alon05, FoxPach08}] \label{bipartite}
If $\mathcal C$ is a class of graphs that admits the strong Erd\H{o}s-Hajnal property, then 
$\mathcal C$ has the Erd\H{o}s-Hajnal property.
\end{theorem}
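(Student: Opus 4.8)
The idea is to bootstrap the strong Erd\H{o}s--Hajnal property by applying it repeatedly along a binary tree, and then to observe that the induced subgraph one extracts is not arbitrary but is a cograph; cographs are perfect and hence contain a clique or stable set of \emph{polynomial} size, which is what makes the scheme produce a polynomial lower bound rather than only a logarithmic one.

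Let $\varepsilon>0$ witness the strong Erd\H{o}s--Hajnal property of $\mathcal C$; we may assume $\varepsilon<\tfrac{1}{2}$, since the property only weakens as $\varepsilon$ decreases. Fix $G\in\mathcal C$ on $n$ vertices. First I would build a rooted binary tree $\mathcal T$ whose nodes are labelled by subsets of $V(G)$ and whose internal nodes carry a colour in $\{\mathrm{complete},\mathrm{empty}\}$. The root is labelled $V(G)$. Whenever a node has a label $U$ that is not yet too small, apply the strong Erd\H{o}s--Hajnal property to the induced subgraph $G[U]\in\mathcal C$ to obtain disjoint $X,Y\subseteq U$ with $|X|=|Y|\ge\varepsilon|U|$ that are complete or anticomplete to each other; give the node two children labelled $X$ and $Y$, coloured accordingly. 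Since a label shrinks by a factor at least $\varepsilon$ from parent to child, the construction can be carried out for a uniform $t=\lfloor\log_{1/\varepsilon}n\rfloor-O(1)$ levels with all $2^{t}$ leaves having nonempty label. Picking one vertex from each leaf label yields a set $S\subseteq V(G)$ with $|S|=2^{t}=\Omega\!\left(n^{1/\log_2(1/\varepsilon)}\right)$.

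The heart of the proof is to identify $G[S]$. Let $u,v\in S$ lie in two leaves whose lowest common ancestor in $\mathcal T$ is the node $a$. Then $u$ and $v$ belong to the labels of two distinct children of $a$, i.e.\ to the two sides of the bipartition produced at $a$; hence $uv\in E(G)$ if and only if $a$ is coloured $\mathrm{complete}$, and this depends on nothing but the colour of $a$. Equivalently, $G[S]$ is obtained from single vertices by following $\mathcal T$ from the leaves up, taking a disjoint union at each $\mathrm{empty}$ node and a complete join at each $\mathrm{complete}$ node: that is, $G[S]$ is the cograph whose cotree is $\mathcal T$, and in particular $G[S]$ is perfect. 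A perfect graph on $N$ vertices has $\omega=\chi\ge N/\alpha$, hence $\omega\alpha\ge N$ and it contains a clique or a stable set of size at least $\sqrt N$. Applied to $G[S]$ this gives a clique or stable set of $G$ of size $\ge\sqrt{|S|}=\Omega\!\left(n^{\delta}\right)$ with $\delta=\tfrac{1}{2\log_2(1/\varepsilon)}>0$; a standard rounding (using that every graph on at least two vertices has a clique or a stable set of size $2$) turns this into a bound $n^{\delta'}$ for a slightly smaller $\delta'>0$, which is the Erd\H{o}s--Hajnal property.

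A couple of remarks on where the effort lies. The routine but non-vacuous bookkeeping is making $\mathcal T$ an honest complete binary tree of a single depth, with every internal node large enough to be split and every leaf nonempty; this costs only $O(1)$ extra levels and so only a constant factor in $|S|$. The genuinely essential point --- the reason the construction does not defeat itself --- is the final step: for an arbitrary graph on $|S|$ vertices Ramsey's theorem would supply only a clique or stable set of size $\Theta(\log|S|)=\Theta(\log n)$, which is far too small; it is crucial that iterating the bipartite splits forces $G[S]$ to be a cograph, so that the trivial $\sqrt{\cdot}$ bound for perfect graphs is available. I expect verifying the ``adjacency $=$ colour of the lowest common ancestor'' description of $G[S]$ to be the conceptual core of a full write-up, the rest being routine.
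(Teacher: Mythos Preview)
Your proposal is correct and is essentially the same argument as the paper's: both recursively apply the strong property along a binary tree to produce an induced $P_4$-free subgraph (cograph) of polynomial size $n^{c'}$ with $c'=1/\log_2(1/\varepsilon)$, and then use perfectness of cographs to extract a clique or stable set of size $n^{c'/2}$. The only cosmetic difference is that the paper phrases the construction as a one-line induction (``$2(cn)^{c'}\ge n^{c'}$'') rather than building the full binary tree explicitly.
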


\begin{proof}[Sketch of the proof]
Let $c>0$. Assume that every graph of the class $\mathcal{C}$ has a complete $cn$-bipartite graph or an empty $cn$-bipartite graph. Let $c'>0$ such that $c^{c'}\geq \frac{1}{2}$. We prove by induction that every graph $G$ of $\mathcal C$ induces a $P_4$-free graph of size $n^{c'}$. By our hypothesis on $\mathcal C$, there exists, say, a complete $c \cdot n$-bipartite graph $X,Y$ in $G$. By applying the induction hypothesis independently on $X$ and $Y$, we form a $P_4$-free graph on $2(c \cdot n)^{c'}\geq n^{c'}$ vertices. The Erd\H{o}s-Hajnal property of $\mathcal C$ follows from the fact that every $P_4$-free $n^{c'}$-graph has a clique or a stable set of size at least $n^{\frac{c'}{2}}$.
\end{proof}
Combining Theorem~\ref{path} and~\ref{bipartite} ensures that graphs with no long hole nor anti-hole satisfy the Erd\H{o}s-Hajnal conjecture.

\bibliographystyle{plain}

\end{document}